\newtheorem{theorem}{Theorem}
\newtheorem{lemma}{Lemma}
\newtheorem*{remark}{Remark}
\providecommand{\customgenericname}{}
\newcommand{\newcustomtheorem}[2]{%
  \newenvironment{#1}[1]
  {%
   \renewcommand\customgenericname{#2}%
   \renewcommand\theinnercustomgeneric{##1}%
   \innercustomgeneric
  }
  {\endinnercustomgeneric}
}
\title{\bf A New Family of Tractable Ising Models}
\author{%
  Valerii Likhosherstov$^{1}$, Yury Maximov$^{(2,1)}$ and Michael Chertkov$^{(3,2,1)}$\\
$^{1}$ Skolkovo Institute of Science and Technology, Moscow, Russia\\
$^{2}$ Theoretical Division and Center for Nonlinear Studies,\\ Los Alamos National Laboratory, Los Alamos, NM, USA\\
$^{3}$ Graduate Program in Applied Mathematics,\\ University of Arizona, Tucson, AZ, USA
}
\begin{document}

\maketitle

\begin{abstract}
We present a new family of zero-field Ising models over $N$ binary variables/spins obtained by consecutive ``gluing'' of planar and $O(1)$-sized components along with subsets of at most three vertices into a tree. The polynomial time algorithm of the dynamic programming type for solving exact inference (partition function computation) and sampling consists of a sequential application of an efficient (for planar) or brute-force (for $O(1)$-sized) inference and sampling to the components as a black box. To illustrate utility of the new family of tractable graphical models, we first build an $O(N^\frac32)$ algorithm for inference and sampling of the $K_5$-minor-free zero-field Ising models---an extension of the planar zero-field Ising models---which is neither genus- nor treewidth-bounded. Second, we demonstrate empirically an improvement in the approximation quality of the NP-hard problem of the square-grid Ising model (with non-zero field) inference.
\end{abstract}

\section{Introduction}

Let $G=(V, E)$ be an undirected graph with a set of vertices $V(G)$ and a set of normal edges $E(G)$ (no loops or multiple edges). We discuss \textit{Ising models} which associate the following probability to each random $N\triangleq |V(G)|$-dimensional binary variable/spin configuration $X \in  \{\pm 1 \}^{N}$: 
\begin{gather}
    \mathbb{P}(X)  \triangleq \frac{\mathcal{W}(X)}{Z}, \label{eq:zfim}
\end{gather}
where 
\begin{gather}
    \mathcal{W}(X) \triangleq  \exp\biggl( \sum_{v \in V(G)} \mu_v x_v + \sum_{ e = \{ v, w \} \in E(G)} J_e x_v x_w \biggr) 
   \quad \text{and}\quad Z \triangleq \sum_{X \in \{ \pm 1 \}^N} \mathcal{W}(X). \label{eq:Z} 
\end{gather}
Here, $\mu = (\mu_v, v \in V(G))$ is a vector of \textit{(magnetic) fields}, $J = (J_e, e \in E(G))$ is a vector of the \textit{(pairwise) spin interactions}, and the normalization constant $Z$, which is defined as a sum over $2^N$ spin configurations, is referred to as the \textit{partition function}. Given the model specification $\mathcal{I} = \langle G, \mu, J \rangle$, we address the tasks of finding the exact value of $Z$ (inference) and drawing exact samples with the probability (\ref{eq:zfim}).

\textbf{Related work.} 
It has been known since the seminal contributions of Fisher \cite{fisher} and Kasteleyn \cite{kasteleyn} that computation of the partition function in the zero-field ($\mu = 0$) Ising model over a planar graph and sampling from the respective probability distribution are both tractable, that is, these are tasks of complexity polynomial in $N$.
As shown by Barahona \cite{barahona}, even when $G$ is planar or when $\mu = 0$ (\textit{zero field}),
the positive results are hard to generalize---both addition of the non-zero (magnetic) field  and the extension beyond planar graphs make
the computation of the partition function NP-hard. These results are also consistent with the statement from Jerrum and Sinclair \cite{jerrum-sinclair} that computation of the partition function of the zero-field Ising model is a \#P-complete problem, even in the ferromagnetic case when all components of $J$ are positive. Therefore, describing $\langle G, \mu, J \rangle$ families for which computations of the partition function and sampling are tractable remains an open question.

The simplest tractable (i.e., inference and sampling are polynomial in $N$) example is one when $G$ is a tree, and the corresponding inference algorithm, known as \textit{dynamic programming} and/or \textit{belief propagation}, has a long history in physics \cite{bethe,peierls}, optimal control \cite{bellman}, information theory \cite{gallager}, and artificial intelligence \cite{pearl}. Extension to the case when $G$ is a tree of $(t + 1)$-sized cliques ``glued'' together, or more formally when $G$ is of a \textit{treewidth}~$t$, is known as the \textit{junction tree algorithm} \cite{jensen}, which has complexity of counting and sampling that grow exponentially with $t$.

Another insight 
originates from the foundational statistical physics literature of the last century related to a zero-field version of (\ref{eq:zfim}) when $G$ is planar. Onsager \cite{onsager} found a closed-form solution of (\ref{eq:zfim}) in the case of a homogeneous Ising model over an infinite two-dimensional square grid. Kac and Ward \cite{kac-ward} reduced the inference of (\ref{eq:zfim}) over a finite square lattice to computing a determinant. Kasteleyn \cite{kasteleyn} generalized this result to an arbitrary planar graph. Kasteleyn's approach consists of expanding each vertex of $G$ into a gadget and reducing the Ising model inference to the problem of counting perfect matchings over the expanded graph. Kasteleyn's construction was simplified by Fisher in \cite{fisher}. The tightest running time estimate for Kasteleyn's method gives $O(N^\frac32)$. Kasteleyn conjectured, which was later proven in~\cite{gallucio}, that the approach extends to the case of the zero-field Ising model over graphs embedded in a surface of \textit{genus} $g$ with a multiplicative $O(4^g)$ penalty. A slightly different reduction to perfect matching counting \cite{barahona,bieche,schraudolph-kamenetsky} also allows one to implement $O(N^{\frac{3}{2}})$ sampling of planar zero-field Ising models using Wilson's algorithm \cite{k33,wilson}. A $K_{33}$ (Figure \ref{fig:k5}(a)) minor-free extension of planar zero-field inference and sampling was constructed in \cite{k33}.

An upper-bound approximation to a general class of inference problems can be built by utilizing the family of tractable spanning Ising submodels---either trees \cite{wainwright} or planar topologies \cite{globerson}.



\textbf{Contribution.} In this manuscript, we first describe a new family of zero-field Ising models on graphs that are more general than planar. Given a tree decomposition of such graphs into planar and ``small'' ($O(1)$-sized) components ``glued'' together along sets of at most three vertices, inference and sampling over the new family of models is of polynomial time. We further show that all the $K_5$-minor-free graphs are included in this family and, moreover, their aforementioned tree decomposition can be constructed with $O(N)$ efforts. (See Figure \ref{fig:k5}(a) for an illustration.)  This allows us to prove an $O(N^\frac32)$ upper bound on run time complexity of inference and sampling of the $K_5$-free zero-field Ising models. Planar graphs are included in the set of $K_5$-free graphs, which are neither genus- nor treewidth-bounded in general.



Second, we show how the newly introduced tractable family of zero-field Ising models allows extension of the approach of \cite{globerson} to an upper-bound log-partition function of arbitrary Ising models. Instead of using planar spanning subgraphs as in \cite{globerson}, we utilize more general (nonplanar) basic tractable elements. Using the methodology of \cite{globerson}, we illustrate the approach through experiments with a nonzero-field Ising model on a square grid for which inference is NP-hard \cite{barahona}.

\textbf{Relation to other algorithms.} The result presented in this manuscript is similar to the approach used to count perfect matchings in $K_5$-free graphs \cite{curticapean,straub}. However, we do not use a transition to perfect matching counting as it is typically done in studies of zero-field Ising models over planar graphs~\cite{fisher,kasteleyn,thomas-middleton1}. 
Presumably, a direct transition to perfect matching counting can be done via a construction of an expanded graph in the fashion of \cite{fisher,kasteleyn}. However, this results in a size increase and, what's more important, there is no direct correspondence between spin configurations and perfect matchings, so sampling is not supported.


Our approach can also be viewed as extending results reported in \cite{k33} on the inference and sampling in the $K_{33}$-free zero-field Ising models. In \cite{k33}, $K_{33}$-free graphs are decomposed into planar and $K_5$ components along pairs of vertices, and the whole construction relies on the underlying planar perfect matching model. In this manuscript, we reformulate the $K_{33}$-free construction of \cite{k33} directly in terms of the Ising model bypassing mapping to perfect matchings. 
Moreover, an extension of the decomposition to gluing over triplets of vertices generalizes the construction, in particular, yielding novel results for efficient inference and learning for the zero-field Ising models over $K_5$-free graphs.




\textbf{Structure.} Section \ref{sec:main} formally introduces the concept of the so-called $c$-nice decomposition of graphs and formulates and proves tractability of $c$-nice decomposable zero-field Ising models.
Section \ref{sec:dec} introduces basic notations used later in the manuscript. Section \ref{sec:cond} describes a useful technical instrument, called conditioning, which is then used in Section \ref{sec:inf} and Section \ref{sec:samp} to describe algorithms for efficient inference and learning, respectively, of the zero-field Ising models over graphs, which allows for a $c$-nice decomposition, where $c$ is a positive integer. Section \ref{sec:k5} describes an application of the algorithm to the example of the $K_5$-free zero-field Ising models. Section \ref{sec:emp} presents an empirical application of the newly introduced family of tractable models to an upper-bounding log-partition function of a broader family of intractable graphical models (planar nonzero-field Ising models). Section \ref{sec:concl} is reserved for conclusions.

\section{Algorithm} \label{sec:main}

We commence by introducing the concept of $c$-nice decomposition of a graph and stating the main result on the tractability of the new family of Ising models in subsection \ref{sec:dec}. We introduce a helpful ``conditioning'' machinery in subsection \ref{sec:cond} and then describe the efficient inference (subsection \ref{sec:inf}) and sampling (subsection \ref{sec:samp}) algorithms which constructively prove the statement.


\subsection{Decomposition tree and the main result} \label{sec:dec}

Throughout the text, we use common graph-theoretic notations and definitions \cite{diestel} and also restate the most important concepts briefly. We mainly follow \cite{curticapean,reed} in the definition of the decomposition tree and its properties sufficient for our goals. Again, we point out that we only consider graphs without loops or multiple edges.

The graph is \textit{planar} when it can be drawn on a plane without edge intersections. Graph $G'$ is a \textit{subgraph} of $G$ whenever $V(G') \subseteq V(G)$ and $E(G') \subseteq E(G)$. For two subgraphs $G'$ and $G''$ of $G$, let $G' \cup G'' = (V(G') \cup V(G''), E(G') \cup E(G''))$ (graph \textit{union}).


Consider a tree decomposition $\mathcal{T} = \langle T, \mathcal{G} \rangle$ of a graph $G$ into a set of subgraphs $\mathcal{G} \triangleq \{ G_t \}$ of $G$, where $t$ are \textit{nodes} of a tree $T$, that is, $t \in V(T)$. One of the nodes of the tree, $r \in V(T)$, is selected as the root. For each node $t \in V(T)$, its \textit{parent} is the first node on the unique path from $t$ to $r$.
$G_{\leq t}$  denotes the graph union of $G_{t'}$ for all the nodes $t'$ in $V(T)$ that are $t$ or its descendants.
$G_{\nleq t}$ denotes the graph union of $G_{t'}$ for all the nodes $t'$ in $V(T)$ that are neither $t$ nor descendants of~$t$.
For two neighboring nodes of the tree, $t, p \in V(T)$ and $\{ t, p \} \in E(T)$, the set of overlapping vertices of $G_t$ and $G_p$, $K \triangleq V(G_t) \cap V(G_p)$, is called an  \textit{attachment set} of~$t$ or~$p$. If $p$ is a parent of~$t$, then $K$ is a \textit{navel} of $t$. We assume that the navel of the root is empty.

$\mathcal{T}$ is a \textit{$c$-nice decomposition} of $G$ if the following requirements are satisfied:
\begin{enumerate}
    \item $\forall t \in V(T)$ with a navel $K$, it holds that $K = V(G_{\leq t}) \cap V(G_{\nleq t})$.

    \item Every attachment set $K$ is of size $0$, $1$, $2$, or $3$.

    \item $\forall t \in V(T)$, either $| V(G_t) | \leq c$ or $G_t$ is planar.

    \item If $t \in V(T)$ is such that $| V(G_t) | > c$, addition of all edges of type $e = \{ v, w \}$, where $v, w$ belong to the same attachment set of $t$ (if $e$ is not yet in $E(G_t)$) does not destroy planarity of $G_t$.
\end{enumerate}


Stating it informally, the $c$-nice decomposition of $G$ is a tree decomposition of $G$ into planar and ``small'' (of size at most $c$) subgraphs $G_t$, ``glued'' via subsets of at most three vertices of $G$. Figure~\ref{fig:k5}(a) shows an example of a $c$-nice decomposition with $c = 8$.
There are various similar ways to define a graph decomposition in literature, and the one presented above is customized to include only properties significant for our consecutive analysis. The remainder of this section is devoted to a constructive proof of the following statement.
\begin{theorem} \label{th:main}
    Let $\mathcal{I} = \langle G, 0, J \rangle$ be any zero-field Ising model where there exists a $c$-nice decomposition $\mathcal{T}$ of $G$, where $c$ is an absolute constant. Then, there is an algorithm which, given $\mathcal{I}, \mathcal{T}$ as an input, does two things: (1) finds $Z$ and (2) samples a configuration from $\mathcal{I}$ in time $O( \sum\limits_{t \in V(T)} | V(G_t) |^\frac32 )$. 
\end{theorem}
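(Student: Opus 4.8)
The plan is to prove the theorem by a dynamic-programming traversal of the decomposition tree $T$, processing nodes from the leaves toward the root, where the "messages" passed along tree edges are not single numbers but small tables of (rescaled) partition functions indexed by spin patterns on the attachment sets. Concretely, for a node $t$ with navel $K$ (so $|K| \le 3$ by property 2), I would define, for each spin assignment $\sigma \in \{\pm 1\}^{K}$, the conditional sum $Z_{\le t}(\sigma) \triangleq \sum_{X : X|_K = \sigma} \mathcal{W}_{G_{\le t}}(X)$, i.e.\ the partition function of the zero-field Ising model restricted to the subgraph $G_{\le t}$ with the spins on $K$ clamped to $\sigma$. Property 1 of the $c$-nice decomposition is exactly what guarantees that $G_{\le t}$ and $G_{\nleq t}$ interact only through $K$, so these clamped sums compose correctly: the full $Z$ factorizes as a product of contributions that communicate only through the $O(1)$-sized attachment sets, and $Z = \sum_{\sigma} Z_{\le r}(\sigma)$ at the root (where the navel is empty, so this is a single entry).

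The recursion itself is the heart of the argument. At a node $t$ with children $t_1,\dots,t_k$ having navels $K_1,\dots,K_k \subseteq V(G_t)$, one obtains $Z_{\le t}(\sigma)$ for $\sigma$ ranging over patterns on the navel of $t$ by summing, over all spin configurations of $V(G_t)$ consistent with $\sigma$, the local weight $\mathcal{W}_{G_t}(\cdot)$ times the product $\prod_{i=1}^k Z_{\le t_i}(\cdot|_{K_i})$ of the already-computed child tables. This is a sum over $2^{|V(G_t)|}$ configurations in general, which is fine when $|V(G_t)| \le c$ (brute force, cost $2^{O(c)} = O(1)$ per node up to the claimed polynomial factor), but is too expensive when $G_t$ is the (possibly large) planar piece. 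For the planar case I would invoke, as a black box, the Fisher--Kasteleyn $O(n^{3/2})$ algorithm for zero-field planar Ising inference, together with the conditioning machinery promised in subsection~\ref{sec:cond}: clamping up to three spins on an attachment set is handled by a bounded number of calls to the planar solver on a slightly modified instance. Here property 4 is used --- adding the (at most three) edges inside each attachment set keeps $G_t$ planar, which is what lets "clamping a pair/triple of spins" be re-expressed as an instance that the planar black box accepts (e.g.\ contracting/identifying clamped vertices, or adding strong ferromagnetic/antiferromagnetic bonds, without leaving the planar class). Since each attachment set has size $\le 3$, there are at most $2^3$ patterns per navel and $O(1)$ black-box calls per entry, so the per-node cost is $O(|V(G_t)|^{3/2})$.

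For sampling, I would run the same tree in the opposite direction after the inference pass: starting at the root, draw the spins on $V(G_r)$ from the distribution proportional to $\mathcal{W}_{G_r}(\cdot)\prod_i Z_{\le t_i}(\cdot|_{K_i})$ (again by brute force if small, or by the planar sampler of \cite{k33,wilson} with conditioning if planar), then recurse into each child $t_i$ conditioned on the already-sampled spins on its navel $K_i$; property~1 guarantees that, conditionally on the navel spins, the subtrees are independent, so this yields an exact sample. Summing the per-node costs gives the stated $O\!\left(\sum_{t \in V(T)} |V(G_t)|^{3/2}\right)$ bound.

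I expect the main obstacle to be the conditioning step on the planar components: showing that clamping an arbitrary pattern on an attachment set of size two or three can be reduced, within the planar zero-field regime, to a constant number of calls to the unconditioned planar inference/sampling primitives. Naively fixing a spin breaks the zero-field assumption (it induces effective fields on neighbors), so the reduction must instead be done structurally --- identifying like-clamped vertices and splitting the graph along the attachment set --- and this is precisely where planarity-preservation (property~4) and the size bound $|K|\le 3$ are essential. The bookkeeping that the composed clamped sums reproduce $Z$ exactly (no double-counting of edges or vertices shared across $G_t$'s) is routine given property~1 but needs to be stated carefully.
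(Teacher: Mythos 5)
Your plan is essentially the paper's own proof: the same tables of navel-conditioned partition functions $Z^{\leq t}_{|S}$ passed leaf-to-root, the same recursion, brute force for small components, the planar black box with conditioning (Lemma~\ref{lemma:cond}) for large ones, and the same root-to-leaf conditional resampling pass. You also correctly identify the conditioning reduction (contracting the clamped connected attachment set rather than introducing effective fields) as the place where property~4 and $|K|\le 3$ enter.

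One step is under-specified, though, and it is the one place where the zero-field hypothesis is used beyond the black box itself. Your per-node sum is $\sum_{Y_t}\mathcal{W}_{G_t}(Y_t)\prod_i Z_{\leq t_i}(Y_t|_{K_i})$, but the planar solver only accepts zero-field pairwise Ising models, so each child table must be rewritten as $\exp$ of a constant plus \emph{pairwise} couplings on edges inside $K_i$. This works precisely because the child submodel $\mathcal{I}_{\leq t_i}$ is zero-field, hence invariant under a global spin flip, so $Z_{\leq t_i}(\sigma)=Z_{\leq t_i}(-\sigma)$; for $|K_i|=3$ this kills the odd part of $\log Z_{\leq t_i}$ (the singleton terms and the three-body term $y_1y_2y_3$), leaving exactly the span of $\{1,\,y_1y_2,\,y_1y_3,\,y_2y_3\}$ as in Eq.~(\ref{eq:3vsystem}). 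Without this symmetry the message would generically require field and triple-interaction terms and could not be folded into a zero-field pairwise model on $G_t$. Adding that observation (and noting that the extra navel edges stay planar by property~4) completes your argument, and the rest of the bookkeeping and the complexity count go through as you describe.
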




\begin{figure}[!th]
    \centering
    \includegraphics[width=0.9\linewidth]{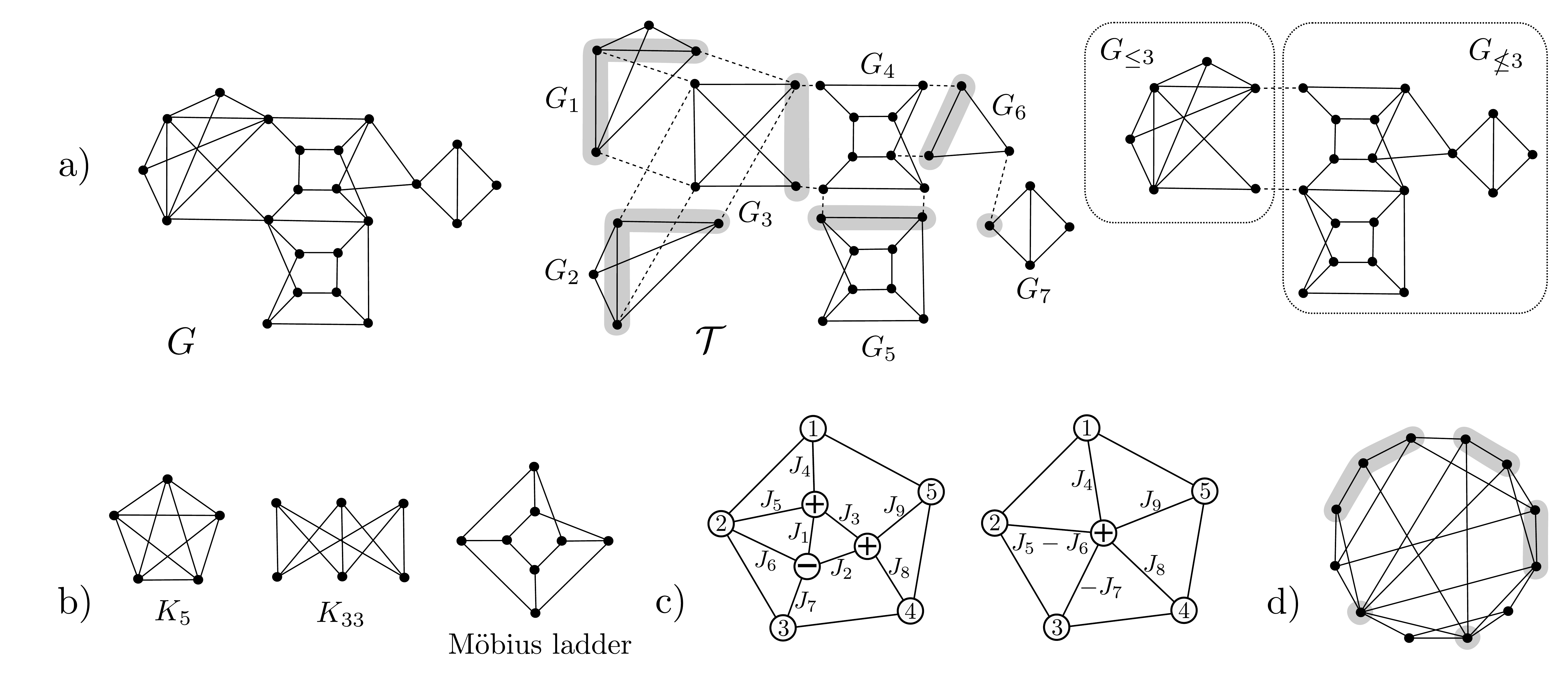}
    \caption{a) An exemplary graph $G$ and its $8$-nice decomposition $\mathcal{T}$, where $t\in \{1, \cdots, 7\}$ labels nodes of the decomposition tree $T$ and node $4$ is chosen as the root ($r = 4$). Identical vertices of $G$ in its subgraphs $G_t$ are shown connected by dashed lines. Navels of size $1$, $2$, and $3$ are highlighted. Component $G_5$ is nonplanar, and $G_4$ becomes nonplanar when all attachment edges are added (according to the fourth item of the definition of the $c$-nice decomposition). $G_{\leq 3}$ and $G_{\nleq 3}$ are shown with dotted lines. Note that the decomposition is non-unique for the graph. For instance, edges that belong  to the attachment set can go to either of the two subgraphs containing this set or even repeat in both. b) Minors $K_5$ and $K_{33}$ are forbidden in the planar graphs. M\"{o}bius ladder and its subgraphs are the only nonplanar graphs allowed in the $8$-nice decomposition of a $K_5$-free graph. c) The left panel is an example of conditioning on three vertices/spins in the center of a graph. The right panel shows a modified graph where the three vertices (from the left panel) are reduced to one vertex, then leading to a modification of the pairwise interactions within the associated zero-field Ising model over the reduced graph. 
    d) Example of a graph that contains $K_5$ as a minor: by contracting the highlighted groups of vertices and deleting the remaining vertices, one arrives at the $K_5$ graph.}
    \label{fig:k5}
\end{figure}

\subsection{Inference and sampling conditioned on 1, 2, or 3 vertices/spins} \label{sec:cond}

Before presenting the algorithm that proves Theorem \ref{th:main} constructively, let us introduce auxiliary machinery of ``conditioning'', which describes the partition function of a zero-field Ising model over a planar graph conditioned on $1$, $2$, or $3$ spins. 
Consider a zero-field Ising model $\mathcal{I} = \langle G, 0, J \rangle$ defined over a planar graph $G$. Recall the following result, rigorously proven in \cite[Corollary under Theorem 1]{k33},  which we intend to use in the aforementioned tree decomposition as a black box. 
\begin{theorem}
    Given $\mathcal{I} = \langle G, 0, J \rangle$, where $G$ is planar, $Z$ can be found in time $O(N^\frac32)$. Drawing a sample from $\mathcal{I}$ is a task of $O(N^\frac32)$ complexity.
    \label{th:planar}
\end{theorem}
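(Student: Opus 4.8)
The plan is to prove Theorem~\ref{th:planar} by the classical Fisher--Kasteleyn route: first reduce the zero-field partition function to a Pfaffian on an auxiliary planar graph, and then exploit planarity a second time, at the linear-algebra level, to beat the naive cubic cost. I begin with the high-temperature expansion. Writing $\exp(J_e x_v x_w) = \cosh(J_e)\,(1 + x_v x_w \tanh J_e)$ for every edge $e = \{v,w\}$ and expanding the product, a monomial indexed by an edge subset $S$ equals $\prod_{v} x_v^{\deg_S(v)}$, so summing over $X \in \{\pm 1\}^N$ annihilates every $S$ in which some vertex has odd degree. This yields
\begin{gather}
Z = 2^N \Bigl( \prod_{e \in E(G)} \cosh J_e \Bigr) \sum_{H} \prod_{e \in H} \tanh J_e ,
\end{gather}
where $H$ ranges over the even subgraphs of $G$, and the weighted even-subgraph sum is the only object left to evaluate.

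Next I would turn this sum into a weighted perfect-matching count. Using Fisher's gadget \cite{fisher}, each vertex of degree $d$ is replaced by an $O(d)$-sized planar cluster so that perfect matchings of the expanded graph $G^\ast$ correspond to even subgraphs $H$ of $G$, with the matching weight reproducing $\prod_{e \in H}\tanh J_e$. Since $G$ is planar, $|E(G)| \le 3N - 6$, so $G^\ast$ is planar with $O(N)$ vertices and edges. Kasteleyn's theorem \cite{kasteleyn} then supplies a Pfaffian orientation of $G^\ast$, computable in $O(N)$ time from a planar embedding, whose signed skew-symmetric adjacency matrix $A$ satisfies $\mathrm{Pf}(A) = \pm\sum_{\text{matchings}} \text{weight}$; since $\mathrm{Pf}(A)^2 = \det A$, the matching sum is recovered as $\pm\sqrt{\det A}$.

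The exponent $\tfrac32$ enters when computing $\det A$: because $A$ is sparse with the adjacency pattern of the planar graph $G^\ast$, the planar separator theorem provides balanced separators of size $O(\sqrt{N})$, and recursively eliminating along them (generalized nested dissection) produces a triangular factorization of $A$, hence $\det A$, in $O(N^{3/2})$ operations, which proves the inference bound. For sampling I would \emph{not} use this expansion directly, since the gadget matchings track only the edge sets $H$ and not the spins (a configuration and its global flip collapse together). Instead I would pass to the dual domain-wall picture: a spin configuration induces the set of dual edges separating $+$ from $-$ regions, a map that is exactly $2$-to-$1$ onto the even subgraphs of the planar dual $G^\dagger$ with the correct Boltzmann weight \cite{barahona,bieche,schraudolph-kamenetsky}. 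Sampling then draws an even subgraph of $G^\dagger$ and reconstructs one of its two realizing spin configurations uniformly; the even-subgraph draw is itself a planar dimer-sampling problem solvable in $O(N^{3/2})$, either by sequential conditioning that reads dimer marginals off minors of the inverse Kasteleyn matrix $A^{-1}$ (whose needed entries come from the same factorization) or by Wilson's algorithm \cite{wilson}.

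The main obstacle, I expect, is twofold. First, for sampling one must arrange the linear algebra so that the determinant and the required entries of $A^{-1}$ are served by a single $O(N^{3/2})$ factorization rather than by recomputing a fresh determinant at each of the $O(N)$ conditioning steps, which would cost $O(N^{5/2})$; the nested-dissection factorization does support such partial-inverse queries, but bounding the cost of the sequential updates within budget takes care. Second, the sign and boundary bookkeeping in the dual representation---matching the Kasteleyn signs, the outer-face convention, and the $2$-to-$1$ correspondence---must be reconciled so that every sampled matching corresponds to a genuine pair of spin configurations.
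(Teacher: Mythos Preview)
The paper does not prove Theorem~\ref{th:planar} at all: it is quoted as a known result, ``rigorously proven in \cite[Corollary under Theorem~1]{k33},'' and used thereafter as a black box. So there is no in-paper proof to compare against; anything you write is already more than the paper provides.

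Your sketch is broadly the classical route and is consistent with what the paper's introduction attributes to the literature (Fisher--Kasteleyn for the Pfaffian reduction; nested dissection via planar separators for the $O(N^{3/2})$ determinant; the perfect-matching reduction of \cite{barahona,bieche,schraudolph-kamenetsky} together with Wilson's algorithm \cite{wilson} for sampling, exactly as cited in \cite{k33}). The two caveats you flag---that the Fisher-gadget matchings encode even subgraphs rather than spin configurations, and that the $O(N^{3/2})$ sampling budget requires reusing a single sparse factorization rather than recomputing determinants---are real and are precisely the points where a full proof has content. One small correction: in the dual domain-wall picture the contour set of a spin configuration is an even subgraph of the planar dual $G^\dagger$, but the relevant edge weight on a dual edge is $e^{-2J_e}$ (the energy cost of a domain wall), not $\tanh J_e$; the $\tanh$ weights belong to the primal high-temperature expansion. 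If you intend this as a self-contained proof rather than a pointer to \cite{k33}, that substitution should be made and the $O(N^{3/2})$ sampling claim should be tied explicitly to a specific algorithm (Wilson's determinantal sampler on the Kasteleyn matrix, as in \cite{wilson,k33}) rather than left as an alternative.
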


Let us now introduce the notion of \textit{conditioning}. Consider a spin configuration $X \in \{ \pm 1 \}^N$, a subset $V' = \{ v^{(1)}, \dots, v^{(\omega)} \} \subseteq V(G)$, and define a \textit{condition} $S = \{ x_{v^{(1)}} = s^{(1)}, \dots, x_{v^{(\omega)}} = s^{(\omega)} \}$ \textit{on} $V'$, where $s^{(1)}, \dots, s^{(\omega)} = \pm 1$ are fixed values. Conditional versions of the probability distribution (\ref{eq:zfim}--\ref{eq:Z}) and the \textit{conditional} partition function become 
\begin{eqnarray}
    && \mathbb{P}(X | S) \triangleq \frac{\mathcal{W}(X) \times \mathbbm{1}(X | S)}{Z_{|S}}, \quad \mathbbm{1} (X | S) \triangleq 
    \left\{ \begin{array} {cc}1, & x_{v^{(1)}} = s^{(1)}, \dots, x_{v^{(\omega)}} = s^{(\omega)} \\ 0, & \mbox{otherwise} \end{array} \right.
, 
    \label{eq:zfim_cond} \\ &&
    \text{where}~Z_{|S} \triangleq \sum_{X \in \{\pm 1\}^N} \mathcal{W}(X) \times \mathbbm{1} (X | S). \label{eq:Z_cond}
\end{eqnarray}

Notice that when $\omega = 0$, $S = \{ \}$ and (\ref{eq:zfim_cond}--\ref{eq:Z_cond}) is reduced to (\ref{eq:zfim}--\ref{eq:Z}). The subset of $V(G)$ is \textit{connected} whenever the subgraph, induced by this subset is connected. Theorem \ref{th:planar} can be extended as follows (a formal proof can be found in the supplementary materials).
\begin{lemma} \label{lemma:cond}
Given $\mathcal{I} = \langle G, 0, J \rangle$ where $G$ is planar and a condition $S$ on a connected subset $V' \subseteq V(G)$, $| V' | \leq 3$, computing the conditional partition function $Z_{|S}$ and sampling from $\mathbb{P}(X | S)$
are tasks of $O(N^\frac32)$ complexity.
\end{lemma}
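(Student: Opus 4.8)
The plan is to reduce both tasks to the unconditional planar case of Theorem~\ref{th:planar}, exploiting two structural features of zero-field models: the global spin-flip symmetry $\mathcal{W}(X)=\mathcal{W}(-X)$, and the fact that contracting a connected set of vertices keeps a planar graph planar. Write $V'=\{v^{(1)},\dots,v^{(\omega)}\}$, $\omega\le 3$, with targets $s^{(1)},\dots,s^{(\omega)}$. First I would gauge away the signs: for every $i$ with $s^{(i)}\neq s^{(1)}$, substitute $x_{v^{(i)}}\mapsto -x_{v^{(i)}}$ everywhere in $\mathcal{W}$. Because there is no magnetic field, this produces another zero-field planar Ising model on the same graph, with $J_e$ negated on the edges incident to $v^{(i)}$, and it rewrites the condition as $x_{v^{(i)}}=s^{(1)}$. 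So without loss of generality $S=\{x_{v^{(1)}}=\dots=x_{v^{(\omega)}}=s\}$; the conditional partition function is unchanged, and any sample drawn under the gauged condition maps back to a sample from the original $\mathbb{P}(\cdot\mid S)$ by undoing the flips.

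Next, let $E$ be the event that all spins in $V'$ coincide. Its two completions (all $=s$, all $=-s$) carry equal $\mathcal{W}$-weight by spin-flip symmetry, so $Z_{|S}=\tfrac12 Z_{|E}$ and $\mathbb{P}(x_{v^{(1)}}=s\mid E)=\tfrac12$. Since $V'$ is connected, I would contract its induced subgraph to a single vertex $u$, discarding the resulting self-loops and merging any parallel edges toward a common neighbour by adding their interactions; the resulting graph $G'$ is planar (contraction is a minor operation) with $|V(G')|=N-\omega+1\le N$. Under $E$ the self-loop terms contribute a fixed factor $\exp(\sum_{e\subseteq V'}J_e)$, so $Z_{|E}=\exp(\sum_{e\subseteq V'}J_e)\cdot Z'$, where $Z'$ is the unconditional partition function of the zero-field planar model on $G'$, obtainable in $O(N^{3/2})$ by Theorem~\ref{th:planar}. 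Chaining this with the previous paragraph gives $Z_{|S}$ in $O(N^{3/2})$.

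For sampling from $\mathbb{P}(\cdot\mid S)$: draw $X'$ from the model on $G'$ in $O(N^{3/2})$ (Theorem~\ref{th:planar}); reading off the value $x'_u$ for the contracted vertex and the remaining spins yields a sample from $\mathbb{P}(\cdot\mid E)$. If $x'_u=-s$, flip every spin --- admissible precisely because $\mathcal{W}(-X)=\mathcal{W}(X)$ --- which produces a sample from $\mathbb{P}(\cdot\mid E,\,x_{v^{(1)}}=s)=\mathbb{P}(\cdot\mid S)$; finally undo the gauge flips of the first step. The parts that need actual care are only bookkeeping: tracking the multiplicative constants generated by self-loops and by merging parallel edges, and verifying that contracting a connected vertex set really does preserve planarity. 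I expect no genuine obstacle here --- the content of the lemma is the observation that \emph{conditioning a zero-field planar model on a connected set of equal spins is the same as vertex contraction}, with gauge symmetry reducing arbitrary sign patterns to that case and spin-flip symmetry supplying the factor $\tfrac12$.
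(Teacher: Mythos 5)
Your proposal is correct and rests on exactly the same two ingredients as the paper's proof: the global spin-flip symmetry supplying the factor $\tfrac12$, and the planarity-preserving contraction of the connected conditioned set, with the fixed signs absorbed into the couplings (your gauge transformation is the same device the paper uses when it sets $J_{\{z,v\}} = J_e s^{(i)}$). The only difference is presentational --- you contract $V'$ in one shot after gauging, whereas the paper contracts adjacent pairs iteratively, reducing $\omega=3$ to $2$ to $1$ --- so this is essentially the paper's argument.
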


We omit here the tedious proof of the Lemma, also mentioning that the conditioning algorithm proving the Lemma takes the subset of connected vertices and ``collapses'' them into a single vertex. The graph remains planar and the task is reduced to conditioning on one vertex, which is an elementary operation given in Theorem \ref{th:planar}. (See Figure \ref{fig:k5}(c) for an illustration.)

\subsection{Inference algorithm} \label{sec:inf}


This subsection constructively proves the inference part of Theorem \ref{th:main}.
For each $t \in V(T)$, let $\mathcal{I}_{\leq t} \triangleq \langle G_{\leq t}, 0, \{ J_e \, | \, e \in E(G_{\leq t}) \subseteq E(G) \} \rangle$ denote a zero-field Ising \textit{submodel} \textit{induced} by $G_{\leq t}$. Denote the partition function and subvector of $X$ related to $\mathcal{I}_{\leq t}$ as $Z^{\leq t}$ and $X_{\leq t} \triangleq \{ x_v | v \in V(G_{\leq t}) \}$, respectively.

Further, let $K$ be $t$'s navel and let $S = \{ \forall v \in K: x_v = s^{(v)} \}$ denote some condition on $K$.
Recall that $| K | \leq 3$.
For each $t$, the algorithm computes conditional partition functions $Z^{\leq t}_{| S}$ for all choices of condition spin values $\{ s^{(v)} = \pm 1 \}$.
Each $t$ is processed only when its children have already been processed, so the algorithm starts at the leaf and ends at the root. If $r \in G(T)$ is a root, its navel is empty and $G_{\leq r} = G$, hence $Z = Z^{\leq r}_{| \{  \}}$ is computed after $r$'s processing.

Suppose all children of $t$, $c_1, ..., c_m \in V(T)$ with navels $K_1, ..., K_m \subseteq V(G_t)$ have already been processed, and now $t$ itself is considered.  Denote a spin configuration on $G_t$ as $Y_t \triangleq \{ y_v = \pm 1 \, | \, v \in V(G_t) \}$.
$\mathcal{I}_{\leq c_1}, ..., \mathcal{I}_{\leq c_m}$ are $\mathcal{I}_{\leq t}$'s submodels induced by $G_{\leq c_1}, ..., G_{\leq c_m}$, which can only intersect at their navels in $G_t$. Based on this, one states the following dynamic programming relation:
\begin{equation}
Z^{\leq t}_{| S} = \sum_{Y_t \in \{ \pm 1 \}^{| V(G_t) |} } \mathbbm{1} (Y_t | S) \exp \left( \sum_{e = \{ v, w \} \in E(G_t)} J_e y_v y_w \right) \cdot \prod_{i = 1}^m 
Z^{\leq c_i}_{| S_i [ Y_t ]}.
    \label{eq:dp}
\end{equation}
Here, $S_i [ Y_t ]$ denotes a condition $\{ \forall v \in K_i: x_v = y_v \}$ on $K_i$. The goal is to efficiently perform summation in~\eqref{eq:dp}. Let $I^{(0)}, I^{(1)}, I^{(2)}, I^{(3)}$ be a partition of $\{ 1, ..., m \}$ by navel sizes. Figure \ref{fig:infsamp}(a,b) illustrates inference in $t$.

\begin{enumerate}[wide, labelwidth=!, labelindent=0pt]
    \item \textbf{Navels of size 0, 1.} Notice that if $i \in I^{(0)}$, then $Z^{\leq c_i}_{| \{  \}} = Z^{\leq c_i}$ is constant, which was computed before. The same is true for $i \in I^{(1)}$ and $Z^{\leq c_i}_{| S^{(i)} [Y_t]} = \frac{1}{2} Z^{\leq c_i}$.
%
    \item \textbf{Navels of size 2.} Let $i \in I^{(2)}$ denote $K_i = \{ u^i, q^i \}$ and simplify notation $Z^{\leq c_i}_{y_1, y_2} \triangleq Z^{\leq c_i}_{x_{u^i} = y_1, x_{q^i} = y_2}$ for convenience. Notice that $Z^{\leq c_i}_{|S_i [ Y_t ]}$ is strictly positive, and due to the zero-field nature of $\mathcal{I}_{\leq c_i}$, one finds $Z^{\leq c_i}_{| +1, +1} = Z^{\leq c_i}_{| -1, -1}$ and $Z^{\leq c_i}_{| +1, -1} = Z^{\leq c_i}_{| -1, +1}$. Then, one arrives at $\log Z^{\leq c_i}_{| S_i [ Y_t ]} = A_i + B_i y_{u^i} y_{q^i} $, where $A_i \triangleq \log Z^{\leq c_i}_{| +1, +1} + \log Z^{\leq c_i}_{| +1, -1}$ and $B_i \triangleq \log Z^{\leq c_i}_{| +1, +1} - \log Z^{\leq c_i}_{| +1, -1}$.
    \item \textbf{Navels of size 3.} Let $i \in I^{(3)}$, and as above, denote $K_i = \{ u^i, q^i, h^i \}$ and $Z^{\leq c_i}_{y_1, y_2, y_3} \triangleq Z^{\leq c_i}_{x_{u^i} = y_1, x_{q^i} = y_2, x_{h^i} = y_3}$. 
   \begin{figure}
    \centering
    \includegraphics[width=\linewidth]{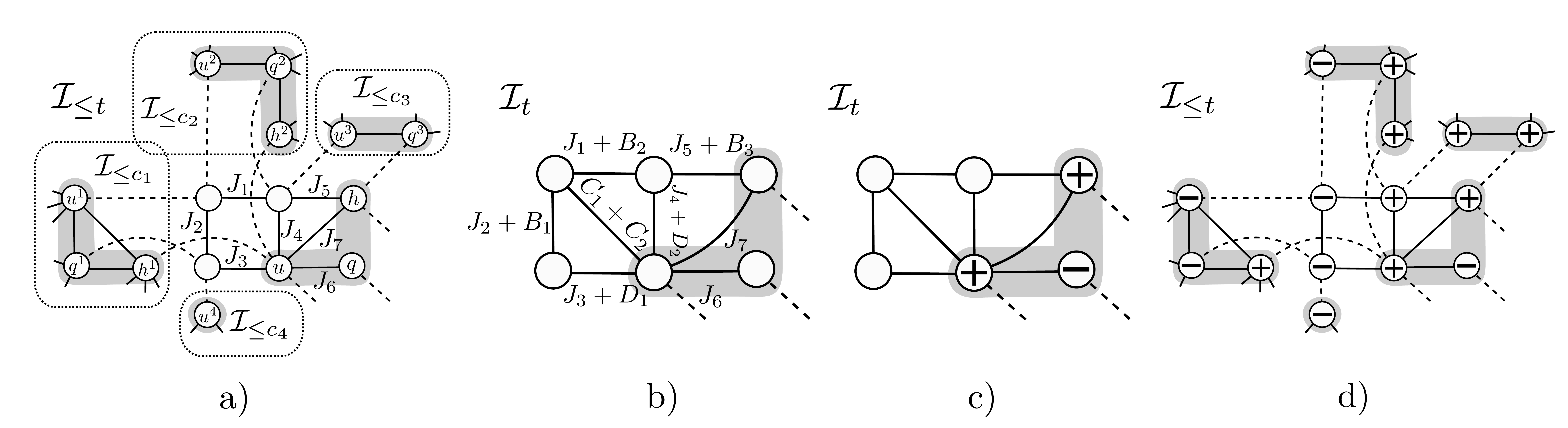}
    \caption{a) Example of inference at node $t$ with children $c_1, c_2, c_3, c_4$. Navels $K_1 = \{ u^1, q^1, h^1 \}, K_2 = \{ u^2, q^2, h^2 \}, K_3 = \{ u^2, q^2 \}, K_4 = \{ u^4 \}$, and $K = \{ u, q, h \}$ are highlighted. Fragments of $\mathcal{I}_{\leq c_i}$ are shown with dotted lines. Here, $I^{(0)} = \varnothing, I^{(1)} = \{ 4 \}, I^{(2)} = \{ 3 \}, \text{and}~ I^{(3)} = \{ 1, 2 \}$, indicating that one child is glued over one node, one child is glued over two nodes, and two children are glued over three nodes. b) ``Aggregated'' Ising model $\mathcal{I}_t$ and its pairwise interactions are shown. Both c) and d) illustrate sampling over $\mathcal{I}_t$. One sample spins in $\mathcal{I}_t$ conditioned on $S^{(t)}$ and then repeats the procedure at the child nodes.}
    \label{fig:infsamp}
\end{figure} 
    
    Due to the zero-field nature of $\mathcal{I}_{\leq c_i}$, it holds that $Z^{\leq c_i}_{| +1, y_2, y_3} = Z^{\leq c_i}_{| -1, y_2, y_3}$. Observe that there are such $A_i, B_i, C_i, D_i$ that $ \log Z^{\leq c_i}_{| y_1, y_2, y_3} = A_i + B_i 
    y_1 y_2+ C_i y_1 y_3 + D_i y_2 y_3$ for all $y_1, y_2, y_3=\pm 1$, which is guaranteed since the following system of equations has a solution:
    \begin{equation}
        \begin{bmatrix} \log Z^{\leq c_i}_{|+1, +1, +1} \\ \log Z^{\leq c_i}_{|+1, +1, -1 } \\ \log Z^{\leq c_i}_{|+1, -1, +1} \\ \log Z^{\leq c_i}_{|+1, -1, -1} \end{bmatrix} = \begin{bmatrix} +1 & +1 & +1 & +1 \\ +1 & +1 & -1 & -1 \\ +1 & -1 & +1 & -1 \\ +1 & -1 & -1 & +1 \end{bmatrix} \times \begin{bmatrix} A_i \\ B_i \\ C_i \\ D_i \end{bmatrix}.
        \label{eq:3vsystem}
    \end{equation}
\end{enumerate}

Considering three cases, one rewrites Eq.~(\ref{eq:dp}) as
\begin{align}
    Z^{\leq t}_{| S} &= M \cdot \sum_{Y_t} \mathbbm{1} (Y_t | S) \exp \biggl( \sum_{e = \{ v, w \} \in E(G_t)} J_e y_v y_w + \sum_{i \in I^{(2)} \cup I^{(3)}} B_i y_{u^i} y_{q^i} \nonumber \\
    &+ \sum_{i \in I^{(3)}} ( C_i y_{u^i} y_{h^i} + D_i y_{q^i} y_{h^i}) \biggr),
    \label{eq:dp4}
\end{align}
where $M \triangleq 2^{- | I^{(1)} |} \cdot \left( \prod_{i \in I^{(0)} \cup I^{(1)}} Z^{\leq c_i} \right) \cdot \exp( \sum_{i \in I^{(2)} \cup I^{(3)}} A_i)$. The sum in Eq.~(\ref{eq:dp4}) is simply a conditional partition function of a zero-field Ising model $\mathcal{I}_t$ defined over a graph $G_t$ with pairwise interactions of $\mathcal{I}$ adjusted by the addition of $B_i, C_i, \text{and}~ D_i$ summands at the appropriate navel edges (if a corresponding edge is not present in $G_t$, it has to be added).
If $| V(G_t) | \leq c$, then (\ref{eq:dp4}) is computed a maximum of four times (depending on navel size) by brute force ($O(1)$ time). Otherwise, if $K$ is a disconnected set in $G_t$, we add zero-interaction edges inside it to make it connected. Possible addition of edges inside $K, K_1, \dots, K_m$ doesn't destroy planarity according to the fourth item in the definition of the $c$-nice decomposition above. Finally, we compute (\ref{eq:dp4}) using Lemma \ref{lemma:cond} in time $O( | V(G_t) |^\frac32 )$.


The inference part of Theorem \ref{th:main} follows directly from the procedure just described.

\begin{figure}
    \centering
    \includegraphics[width=0.9\linewidth]{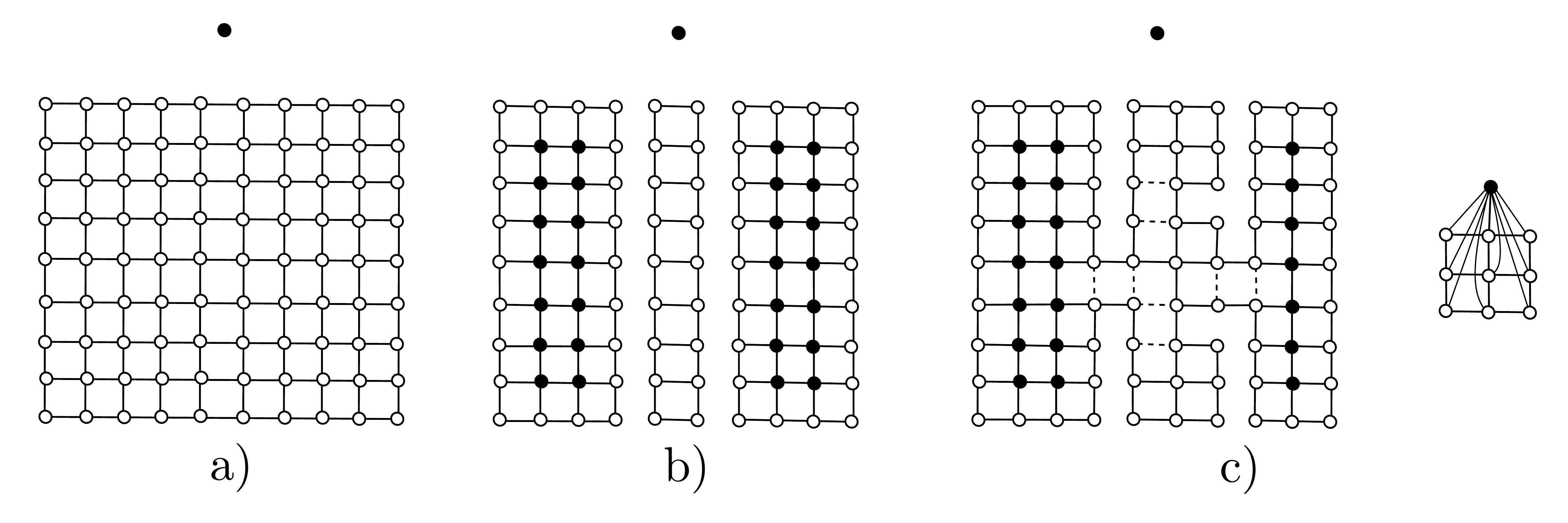}
    \caption{Construction of graphs used for approximate inference on a rectangular lattice. For better visualization, vertices connected to an apex are colored white. a) $G'$ graph. b) One of planar $G^{(r)}$ graphs used in \protect\cite{globerson}. Such ``separator'' pattern is repeated for each column
    and row, resulting in $2(H - 1)$ graphs in $\{ G^{(r)} \}$. In addition, \protect\cite{globerson}
    adds an \textit{independent variables} graph where only apex edges are drawn. c) A modified ``separator'' pattern we propose. 
    Again, the pattern is repeated horizontally and vertically resulting in $2(H - 2)$ graphs $+$ independent variables graph. This pattern covers more magnetic fields and connects separated parts. Dashed edges indicate the structure of $10$-nice decomposition used for inference. (Nonplanar node of size $10$ is illustrated on the right.)
    }
    \label{fig:grid}
\end{figure}

\subsection{Sampling algorithm} \label{sec:samp}

Next, we address the sampling part of Theorem \ref{th:main}.
We extend the algorithm from section \ref{sec:inf} so that it supports efficient sampling from $\mathcal{I}$.
Assume that the inference pass through $T$ (from leaves to root) has been done so that $\mathcal{I}_t$ for all $t \in V(T)$ are computed. 
Denote $X_t \triangleq \{ x_v \, | \, v \in V(G_t) \}$.
The sampling algorithm runs backwards, first drawing spin values $X_r$ at the root $r$ of $T$ from the marginal distribution $\mathbb{P}(X_r)$, and then processing each node $t$ of $T$ after its parent $p$ is processed. Processing consists of drawing spins $X_t$ from $\mathbb{P}(X_t \, | \, X_p) = \mathbb{P}(X_t \, | \, X^{(t)} \triangleq \{ x_v \, | \, v \in K \})$,
where $K$ is a navel of $t$. This marginal-conditional scheme generates the correct sample $X$ of spins over $G$.

Let 
$\mathbb{P}_{\leq t} (X_{\leq t})$ define a spin distribution of $\mathcal{I}_{\leq t}$. Because the Ising model is an example of Markov Random Field, it holds that $\mathbb{P}_{\leq t} (X_{\leq t} \, | \, X^{(t)} ) = \mathbb{P} (X_{\leq t} \, | \, X^{(t)} )$. We further derive 
\begin{align}
    &\mathbb{P}(X_t \, | \, X^{(t)}) = \mathbb{P}_{\leq t}(X_t \, | \, X^{(t)}) = \frac{1}{Z^{\leq t}} \sum_{X_{\leq t}\setminus X_t} \exp \biggl( \sum_{e = \{ v, w \} \in E(G_{\leq t})} J_e x_v x_w \biggr) \nonumber \\
    &= \frac{1}{Z^{\leq t}} \cdot \exp \biggl( \sum_{e = \{ v, w \} \in E(G_t)} J_e x_v x_w \biggr) \cdot \prod_{i = 1}^m Z^{\leq c_i}_{|S_i [X_t]} \nonumber \\
    &\propto \exp \biggl( \sum_{e = \{ v, w \} \in E(G_t)} J_e x_v x_w + \sum_{i \in I^{(2)} \cup I^{(3)}} B_i x_{u^i} x_{q^i} + \sum_{i \in I^{(3)}} ( C_i x_{u^i} x_{h^i} + D_i x_{q^i} x_{h^i}) \biggr). 
\end{align}

In other words, sampling from $\mathbb{P}(X_t \, | \, X^{(t)})$ is reduced to sampling from $\mathcal{I}_t$ conditional on spins $X^{(t)}$ in the navel $K$. It is done via brute force if $| V(G_t) | \leq c$; otherwise, Lemma \ref{lemma:cond} allows one to draw $X_t$ in $O(| V(G_t) |^\frac32)$, since $| K | \leq 3$. Sampling efforts cost as much as inference, which concludes the proof of Theorem \ref{th:main}. Figure \ref{fig:infsamp}(c,d) illustrates sampling in $t$.

\begin{figure}
    \centering
    \includegraphics[width=0.9\linewidth]{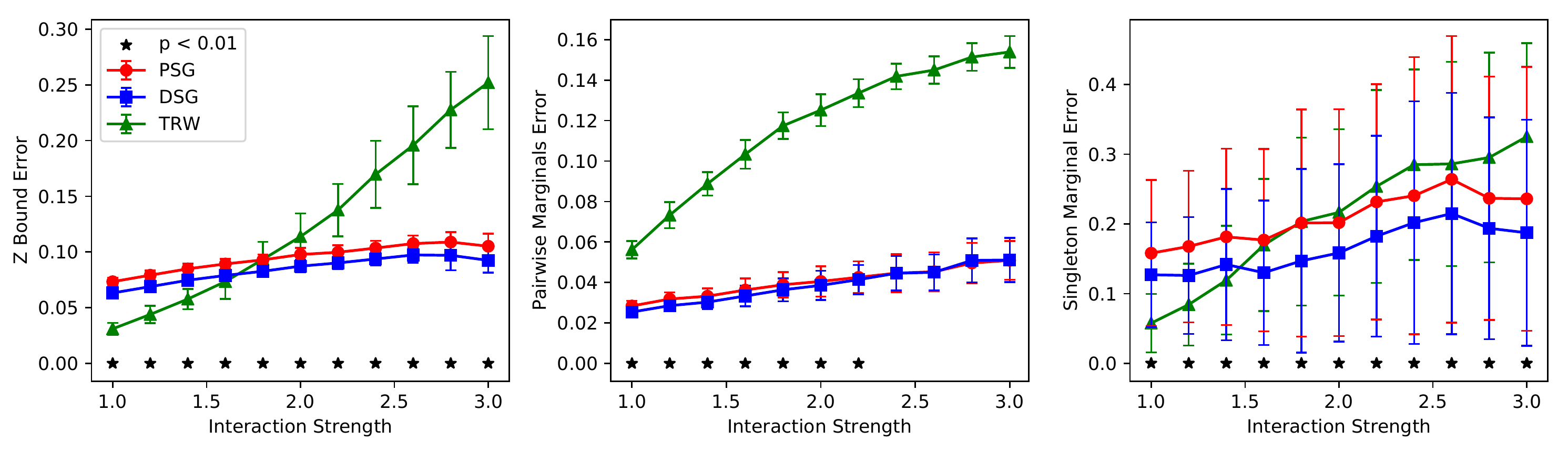}
    \caption{Comparison of tree-reweighted approximation (TRW), planar spanning graph (PSG), and decomposition-based spanning graph (DSG) approaches. The first plot is for normalized log-partition error, the second is for error in pairwise marginals, and the third is for error in singleton central marginal. Standard errors over $100$ trials are shown as error bars. An asterisk ``*'' 
    indicates the statistically significant improvement of DSG over PSG, with a $p$-value smaller than $0.01$ according to the Wilcoxon test with the Bonferroni correction \protect\cite{wilcoxon}. 
    }
    \label{fig:plot}
\end{figure}

\section{Application: \texorpdfstring{$K_5$}--free zero-field Ising models} \label{sec:k5}

\textit{Contraction} is an operation of removing two adjacent vertices $v$ and $u$ (and all edges incident to them) from the graph and adding a new vertex $w$ adjacent to all neighbors of $v$ and $u$.
For two graphs $G$ and $H$, $H$ is $G$'s \textit{minor}, if it is isomorphic to a graph obtained from $G$'s subgraph by a series of contractions. $G$ is \textit{$H$-free}, if $H$ is not $G$'s minor. We especially focus on the case when $H = K_5$ (Figure \ref{fig:k5}(d)). Planar graphs are a special type of $K_5$-free graphs, according to Wagner's theorem \cite[Chapter 4.4]{diestel}. Moreover, some nonplanar graphs are $K_5$-free, for example, $K_{33}$ (Figure \ref{fig:k5}(b)). $K_5$-free graphs are neither genus-bounded (a disconnected set of $g$ $K_{33}$ graphs is $K_5$-free and has a genus of $g$ \cite{battle}) and is not treewidth-bounded (planar square grid of size $t \times t$ is $K_5$-free and has a treewidth of $t$ \cite{bodlaender}).

\begin{theorem} \label{th:k5dec}
Let $G$ be a $K_5$-free graph of size $N$ with no loops or multiple edges. Then, the $8$-nice decomposition $\mathcal{T}$ of $G$ exists and can be computed in time $O(N)$.
\end{theorem}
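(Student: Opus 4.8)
The plan is to obtain $\mathcal{T}$ from Wagner's structural characterization of $K_5$-minor-free graphs, then to verify the four defining properties of an $8$-nice decomposition, and finally to address the running-time bound.

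\textbf{Existence via Wagner's theorem.} First I would note that $G$ is a spanning subgraph of some edge-maximal $K_5$-minor-free graph $\widehat{G}$ on the same vertex set, obtained by greedily adding edges within $V(G)$ while keeping $K_5$-minor-freeness. By Wagner's theorem (see \cite[Chapter~4.4]{diestel}), $\widehat{G}$ can be assembled from planar triangulations and copies of the Wagner graph $V_8$ (the $8$-vertex M\"{o}bius ladder) by clique-sums of order at most $3$. Such an assembly is encoded by a tree $T$ whose nodes are the summands and whose edges record the clique-sums. I would then restrict each summand to the edges actually present in $G$, calling the result $G_t$; choosing an arbitrary node as the root $r$ and defining navels as in Section~\ref{sec:dec}, the attachment set of a tree edge $\{t,p\}$ is exactly the (at most $3$-element) clique along which the two summands were glued. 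Because tree decompositions cover every edge, each edge of $G\subseteq\widehat{G}$ lies in some $G_t$.

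\textbf{Checking the four conditions with $c=8$.} Condition~2 is immediate since clique-sums have order $\le 3$, so every attachment set has size $0$, $1$, $2$, or $3$. Condition~1 holds because a clique-sum assembly is a tree decomposition with the running-intersection property: any vertex in $V(G_{\leq t})\cap V(G_{\nleq t})$ lies in every bag along a $T$-path crossing $\{t,p\}$, hence in $V(G_t)\cap V(G_p)=K$, while the reverse inclusion is trivial. Condition~3 holds because each $G_t$ is a subgraph of a planar triangulation (hence planar) or of $V_8$ (hence on at most $8=c$ vertices). For condition~4, if $|V(G_t)|>8$ then the corresponding summand is a planar triangulation that already contains all edges inside each attachment set of $t$ (these attachment sets are cliques of the triangulation); adding those edges back to $G_t$ therefore produces a subgraph of that planar triangulation, which stays planar.

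\textbf{Linear-time construction.} For the $O(N)$ bound I would build the clique-sum assembly in stages: the block--cut tree splits along $1$-cuts, the triconnected-component (SPQR) decomposition of each block splits along $2$-cuts --- both computable in $O(N)$ by the classical algorithms of Hopcroft and Tarjan and their refinements --- and each $3$-connected piece is then split along its $3$-separations into internally $4$-connected pieces, which can be carried out in linear time building on \cite{reed} and as done in \cite{curticapean,straub} for counting perfect matchings in $K_5$-minor-free graphs. A linear-time planarity test labels each resulting piece; by Wagner's theorem a non-planar internally $4$-connected $K_5$-minor-free piece must be a subgraph of $V_8$, recognizable in $O(1)$ time since it has at most $8$ vertices. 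Reassembling everything into $T$ together with the $G_t$'s then costs $O(\sum_t |V(G_t)|)=O(N)$, and by the previous step the output is an $8$-nice decomposition. I expect the main obstacle to be this algorithmic step --- performing the decomposition along $3$-separations in genuinely linear time while keeping the virtual-edge bookkeeping consistent, so that condition~4 (and the exact-equality form of condition~1) is preserved --- rather than the structural content, which is essentially a repackaging of Wagner's theorem.
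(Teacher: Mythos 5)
Your proposal is correct and follows essentially the same route as the paper: decompose $G$ along $1$-, $2$-, and $3$-cuts into planar pieces and M\"{o}bius-ladder pieces (the structural content of Wagner's theorem), verify the four conditions using the fact that the virtual clique edges keep each large piece planar, and delegate the genuinely hard step---computing the $3$-separation decomposition in linear time---to \cite{reed}, exactly as the paper's appendix does via its $2$-block and $(3,3)$-block trees. The only cosmetic difference is that you phrase existence via an edge-maximal supergraph and clique-sums rather than via the block-tree machinery, but the resulting tree and components are the same.
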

\begin{proof}[Proof (Sketch).]
    An equivalent decomposition is constructed in \cite{reed} in time $O(N)$. We put a formal proof in the Supplementary materials.
\end{proof}
\begin{remark}
    The $O(N)$ construction time of $\mathcal{T}$ guarantees that 
    $\sum_{t \in V(T)} | V(G_t) | = O(N)$. All nonplanar components in $\mathcal{T}$ are isomorphic to the M\"{o}bius ladder (Figure \ref{fig:k5}(b)) or its subgraph.
\end{remark}

The graph in Figure \ref{fig:k5}(a) is actually $K_5$-free. Theorems \ref{th:main} and \ref{th:k5dec} allow us to conclude:
\begin{theorem} \label{th:inf}
    Given $\mathcal{I} = \langle G, 0, J \rangle$ with $K_5$-free $G$ of size $N$, finding $Z$ and sampling from $\mathcal{I}$ take $O(N^\frac32)$ total time.
\end{theorem}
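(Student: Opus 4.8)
The plan is to combine the two preceding theorems directly, so the proof is essentially a matter of assembling the pieces and checking that the running-time bound collapses to the claimed form. First I would invoke Theorem~\ref{th:k5dec}: since $G$ is $K_5$-free, an $8$-nice decomposition $\mathcal{T} = \langle T, \mathcal{G} \rangle$ of $G$ exists and is computable in time $O(N)$. The constant $c = 8$ here is an absolute constant, which is precisely the hypothesis required by Theorem~\ref{th:main}. So I would then feed $\mathcal{I} = \langle G, 0, J \rangle$ together with this $\mathcal{T}$ into the algorithm of Theorem~\ref{th:main}, which outputs $Z$ and an exact sample in time $O\bigl( \sum_{t \in V(T)} | V(G_t) |^{\frac32} \bigr)$.

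The remaining step is the bookkeeping: I must show that $O\bigl( \sum_{t \in V(T)} | V(G_t) |^{\frac32} \bigr) = O(N^{\frac32})$, and also that the $O(N)$ cost of building $\mathcal{T}$ does not dominate. For the latter, $O(N) = O(N^{\frac32})$, so it is absorbed. For the former, I would use the remark following Theorem~\ref{th:k5dec}, which states that $\sum_{t \in V(T)} | V(G_t) | = O(N)$ (a consequence of the decomposition being constructed in linear time). Since each $| V(G_t) | \le N$, we have $| V(G_t) |^{\frac32} \le N^{\frac12} \cdot | V(G_t) |$, and therefore
\begin{equation}
\sum_{t \in V(T)} | V(G_t) |^{\frac32} \le N^{\frac12} \sum_{t \in V(T)} | V(G_t) | = N^{\frac12} \cdot O(N) = O(N^{\frac32}).
\end{equation}
This gives the claimed total time bound for both inference and sampling.

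I do not anticipate a genuine obstacle here, since the statement is a corollary of the machinery already developed; the only point requiring a little care is making sure the aggregate-size bound $\sum_t |V(G_t)| = O(N)$ is actually available rather than merely the number of nodes $|V(T)|$ being $O(N)$ — a $c$-nice decomposition could in principle repeat many vertices across components, so one genuinely needs the linear-time construction guarantee (or an explicit argument that Reed's decomposition has total size $O(N)$) to rule out a blow-up. Once that is in hand, the summation bound above is immediate and the proof is complete. I would close by noting that planar $G$ is the special case in which $\mathcal{T}$ is a single node, recovering Theorem~\ref{th:planar}.
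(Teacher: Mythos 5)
Your proposal is correct and follows essentially the same route as the paper: apply Theorem~\ref{th:k5dec} to build the $8$-nice decomposition in $O(N)$, feed it to Theorem~\ref{th:main}, and use the Remark's bound $\sum_{t \in V(T)} |V(G_t)| = O(N)$ to control the total cost. The only cosmetic difference is the final inequality --- you bound $|V(G_t)|^{\frac32} \leq N^{\frac12}\,|V(G_t)|$ whereas the paper invokes superadditivity of $z \mapsto z^{\frac32}$ to write $\sum_t |V(G_t)|^{\frac32} \leq \bigl(\sum_t |V(G_t)|\bigr)^{\frac32}$; both are valid and your explicit flag that the aggregate-size bound (not merely $|V(T)| = O(N)$) is the needed ingredient is exactly the right point of care.
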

\begin{proof}
    Finding $8$-nice $\mathcal{T}$ for $G$ takes $O(N)$ time (Theorem \ref{th:k5dec}). Provided with $\mathcal{T}$, the complexity is
    \begin{equation*}
        O \left(\sum_{t \in V(T)} | V(G_t) |^\frac32 \right) = O \left( (\sum_{t \in V(T)} | V(G_t) | )^\frac32 \right) = O(N^\frac32),
    \end{equation*}
    where we apply convexity of $f(z) = z^\frac32$ for $z \geq 0$ and the Remark after Theorem \ref{th:k5dec}.
\end{proof}


\section{Application: approximate inference of square-grid Ising model} \label{sec:emp}

In this section, we consider $\mathcal{I} = \langle G, \mu, J \rangle$ such that $G$ is a square-grid graph of size $H \times H$. Finding $Z (G, \mu, J)$ for arbitrary $\mu$, $J$ is an NP-hard problem \cite{barahona} in such a setting. Construct $G'$ by adding an \textit{apex} vertex connected to all $G$'s vertices by edge (Figure \ref{fig:grid}(a)). Now it can easily be seen that $Z (G, \mu, J) = \frac{1}{2} Z(G', 0, J' = (J_\mu \cup J))$, where $J_\mu = \mu$ are interactions assigned for apex edges.

Let $\{ G^{(r)} \}$ be a family of spanning graphs ($V(G^{(r)}) = V(G')$, $E(G^{(r)}) \subseteq E(G')$) and $J^{(r)}$ be interaction values on $G^{(r)}$. Also, denote $\hat{J}^{(r)} = J^{(r)} \cup \{ 0, e \in E(G') \setminus E(G^{(r)}) \}$. Assuming that $\log Z (G^{(r)}, 0, J^{(r)})$ are tractable, the convexity of $\log Z (G', 0, J')$ allows one to write the following upper bound:
\begin{equation} \label{eq:ub}
    \log Z (G', 0, J') \leq \min_{\substack{\rho(r) \geq 0, \sum_r \rho(r) = 1 \\ \{ J^{(r)} \}, \sum_r \rho(r) \hat{J}^{(r)} = J'}} \sum_r \rho(r) \log Z(G^{(r)}, 0, J^{(r)}).
\end{equation}

After graph set $\{ G^{(r)} \}$ has been fixed, one can numerically optimize the right-hand side of (\ref{eq:ub}), as shown in \cite{globerson} for planar $G^{(r)}$.  The extension of the basic planar case is straightforward and can be found in the Supplementary materials for convenience. We also put the description of marginal probabilities approximation suggested in \cite{globerson,wainwright}.

The choice for a planar spanning graph (PSG) family $\{ G^{(r)} \}$ of \cite{globerson} is illustrated in Figure \ref{fig:grid}(b). A tractable decomposition-based extension of the planar case presented in this manuscript suggests a more advanced construction---decomposition-based spanning graphs (DSG) (Figure \ref{fig:grid}(c)). We compare performance of both PSG and DSG approaches as well as the performance of tree-reweighted approximation (TRW) \cite{wainwright} in the following setting of \textit{Varying Interaction}: $\mu \sim \mathcal{U}(-0.5, 0.5)$, $J \sim \mathcal{U}(-\alpha, \alpha)$, where $\alpha \in \{ 1, 1.2, 1.4, \dots, 3 \}$. We opt optimize for grid size $H = 15$ ($225$ vertices, $420$ edges) and compare upper bounds and marginal probability approximations (superscript \textit{alg}) with exact values obtained using a junction tree algorithm \cite{jensen} (superscript \textit{true}). We compute three types of error:
\begin{enumerate}
    \item normalized log-partition error $\frac{1}{H^2} (\log Z^{alg} - \log Z^{true})$,
    \item error in pairwise marginals $\frac{1}{| E(G) |} \sum_{e = \{ v, w \} \in E(G)} | \mathbb{P}^{alg} (x_v x_w = 1) - \mathbb{P}^{true} (x_v x_w = 1) |$, and
    \item error in singleton central marginal $| \mathbb{P}^{alg} (x_v = 1) - \mathbb{P}^{true} (x_v = 1)|$ where $v$ is a vertex of $G$ with coordinates $(8, 8)$.
\end{enumerate}

We average results over $100$ trials (see Fig.~\ref{fig:plot}).\footnote{Hardware used: 24-core Intel\textregistered \, Xeon\textregistered \, Gold 6136 CPU @ 3.00 GHz}\footnote{Implementation of the algorithms is available at \url{https://github.com/ValeryTyumen/planar\_ising}} We use the same quasi-Newton algorithm \cite{bertsekas} and parameters when optimizing (\ref{eq:ub}) for PSG and DSG, but for most settings, DSG outperforms PSG and TRW. Cases with smaller TRW error can be explained by the fact that TRW implicitly optimizes~\eqref{eq:ub}  over the family of \textit{all} spanning trees which can be exponentially big in size, while for PSG and DSG we only use $O(H)$ spanning graphs.

Because PSG and DSG approaches come close to each other, we additionally test for each value of $\alpha$ on each plot, whether the difference $err_{PSG} - err_{DSG}$ is bigger than zero. We apply a one-sided Wilcoxon's test \cite{wilcoxon} together with the Bonferroni correction because we test $33$ times \cite{bonferroni}. In most settings, the improvement is statistically significant (Figure \ref{fig:plot}). 


\section{Conclusion} \label{sec:concl}

In this manuscript, we introduce a new family of zero-field Ising models composed of planar components and graphs of $O(1)$ size. For these models, we describe a polynomial algorithm for exact inference and sampling provided that the decomposition tree is also in the input. 
A theoretical application is $O(N^\frac32)$ inference and sampling algorithm for $K_5$-free zero-field Ising models---a superset of planar zero-field models that is neither treewidth- nor genus-bounded.
A practical application is an improvement of an approximate inference scheme for arbitrary topologies based on planar spanning graphs \cite{globerson} but using tractable spanning decomposition-based graphs instead of planar. We leave the algorithm as it is but substitute planar graphs with a family of spanning decomposition-based graphs that are tractable. This alone gives a tighter upper bound on the true partition function and a more precise approximation of marginal probabilities.

\section{Acknowledgements}
This work was supported by the U.S. Department of Energy through the Los Alamos National Laboratory as part of LDRD and the DOE Grid Modernization Laboratory Consortium (GMLC). Los Alamos National Laboratory is operated by Triad National Security, LLC, for the National Nuclear Security Administration of U.S. Department of Energy (Contract No. 89233218CNA000001).

\bibliographystyle{chicago}
\bibliography{arxiv}

\begin{thebibliography}{}

\bibitem[\protect\citeauthoryear{Barahona}{Barahona}{1982}]{barahona}
Barahona, F. (1982).
\newblock On the computational complexity of {I}sing spin glass models.
\newblock {\em Journal of Physics A: Mathematical and General\/}~{\em
  15\/}(10), 3241.

\bibitem[\protect\citeauthoryear{Battle, Harary, and Kodama}{Battle
  et~al.}{1962}]{battle}
Battle, J., F.~Harary, and Y.~Kodama (1962, 11).
\newblock Additivity of the genus of a graph.
\newblock {\em Bull. Amer. Math. Soc.\/}~{\em 68\/}(6), 565--568.

\bibitem[\protect\citeauthoryear{Bellman}{Bellman}{1952}]{bellman}
Bellman, R. (1952).
\newblock On the theory of dynamic programming.
\newblock {\em Proceedings of the National Academy of Sciences\/}~{\em
  38\/}(8), 716--719.

\bibitem[\protect\citeauthoryear{Bertsekas}{Bertsekas}{1999}]{bertsekas}
Bertsekas, D. (1999).
\newblock {\em Nonlinear Programming}.
\newblock Athena Scientific.

\bibitem[\protect\citeauthoryear{Bethe}{Bethe}{1935}]{bethe}
Bethe, H. (1935).
\newblock Statistical theory of superlattices.
\newblock {\em Proceedings of Royal Society of London A\/}~{\em 150}, 552.

\bibitem[\protect\citeauthoryear{Bieche, Uhry, Maynard, and Rammal}{Bieche
  et~al.}{1980}]{bieche}
Bieche, L., J.~P. Uhry, R.~Maynard, and R.~Rammal (1980).
\newblock On the ground states of the frustration model of a spin glass by a
  matching method of graph theory.
\newblock {\em Journal of Physics A: Mathematical and General\/}~{\em 13\/}(8),
  2553.

\bibitem[\protect\citeauthoryear{Bodlaender}{Bodlaender}{1998}]{bodlaender}
Bodlaender, H.~L. (1998).
\newblock A partial k-arboretum of graphs with bounded treewidth.
\newblock {\em Theoretical Computer Science\/}~{\em 209\/}(1), 1 -- 45.

\bibitem[\protect\citeauthoryear{Curticapean}{Curticapean}{2014}]{curticapean}
Curticapean, R. (2014).
\newblock Counting perfect matchings in graphs that exclude a single-crossing
  minor.
\newblock {\em arXiv preprint arXiv:1406.4056\/}.

\bibitem[\protect\citeauthoryear{Diestel}{Diestel}{2006}]{diestel}
Diestel, R. (2006).
\newblock {\em Graph Theory}.
\newblock Electronic library of mathematics. Springer.

\bibitem[\protect\citeauthoryear{Fisher}{Fisher}{1966}]{fisher}
Fisher, M.~E. (1966).
\newblock On the dimer solution of planar {I}sing models.
\newblock {\em Journal of Mathematical Physics\/}~{\em 7\/}(10), 1776--1781.

\bibitem[\protect\citeauthoryear{Gallager}{Gallager}{1963}]{gallager}
Gallager, R. (1963).
\newblock {\em Low density parity check codes}.
\newblock MIT Press, Cambridge, MA.

\bibitem[\protect\citeauthoryear{Gallucio and Loebl}{Gallucio and
  Loebl}{1999}]{gallucio}
Gallucio, A. and M.~Loebl (1999).
\newblock On the theory of {P}faffian orientations. {I}: Perfect matchings and
  permanents.
\newblock {\em The Electronic Journal of Combinatorics [electronic
  only]\/}~{\em 6\/}(1), Research paper R6, 18 p.

\bibitem[\protect\citeauthoryear{Globerson and Jaakkola}{Globerson and
  Jaakkola}{2007}]{globerson}
Globerson, A. and T.~S. Jaakkola (2007).
\newblock Approximate inference using planar graph decomposition.
\newblock In {\em Advances in Neural Information Processing Systems}, pp.\
  473--480.

\bibitem[\protect\citeauthoryear{Jean~Dunn}{Jean~Dunn}{1961}]{bonferroni}
Jean~Dunn, O. (1961, 03).
\newblock Multiple comparisons among means.
\newblock {\em Journal of The American Statistical Association - J AMER STATIST
  ASSN\/}~{\em 56}, 52--64.

\bibitem[\protect\citeauthoryear{Jerrum and Sinclair}{Jerrum and
  Sinclair}{1993}]{jerrum-sinclair}
Jerrum, M. and A.~Sinclair (1993).
\newblock Polynomial-time approximation algorithms for the {I}sing model.
\newblock {\em SIAM Journal on Computing\/}~{\em 22\/}(5), 1087--1116.

\bibitem[\protect\citeauthoryear{Kac and Ward}{Kac and Ward}{1952}]{kac-ward}
Kac, M. and J.~C. Ward (1952, Dec).
\newblock A combinatorial solution of the two-dimensional {I}sing model.
\newblock {\em Phys. Rev.\/}~{\em 88}, 1332--1337.

\bibitem[\protect\citeauthoryear{Kasteleyn}{Kasteleyn}{1963}]{kasteleyn}
Kasteleyn, P.~W. (1963).
\newblock Dimer statistics and phase transitions.
\newblock {\em Journal of Mathematical Physics\/}~{\em 4\/}(2), 287--293.

\bibitem[\protect\citeauthoryear{Likhosherstov, Maximov, and
  Chertkov}{Likhosherstov et~al.}{2019}]{k33}
Likhosherstov, V., Y.~Maximov, and M.~Chertkov (2019, 09--15 Jun).
\newblock Inference and sampling of $k_{33}$-free ising models.
\newblock In K.~Chaudhuri and R.~Salakhutdinov (Eds.), {\em Proceedings of the
  36th International Conference on Machine Learning}, Volume~97 of {\em
  Proceedings of Machine Learning Research}, Long Beach, California, USA, pp.\
  3963--3972. PMLR.

\bibitem[\protect\citeauthoryear{Onsager}{Onsager}{1944}]{onsager}
Onsager, L. (1944, Feb).
\newblock Crystal statistics. {I}: A two-dimensional model with an
  order-disorder transition.
\newblock {\em Phys. Rev.\/}~{\em 65}, 117--149.

\bibitem[\protect\citeauthoryear{Pearl}{Pearl}{1982}]{pearl}
Pearl, J. (1982).
\newblock Reverend bayes on inference engines: A distributed hierarchical
  approach.
\newblock In {\em Proceedings of the Second AAAI Conference on Artificial
  Intelligence}, AAAI'82, pp.\  133--136. AAAI Press.

\bibitem[\protect\citeauthoryear{Peierls}{Peierls}{1936}]{peierls}
Peierls, H. (1936).
\newblock Ising's model of ferromagnetism.
\newblock {\em Proceedings of Cambridge Philosophical Society\/}~{\em 32},
  477--481.

\bibitem[\protect\citeauthoryear{Reed and Li}{Reed and Li}{2008}]{reed}
Reed, B. and Z.~Li (2008).
\newblock Optimization and recognition for {K}5-minor free graphs in linear
  time.
\newblock In E.~S. Laber, C.~Bornstein, L.~T. Nogueira, and L.~Faria (Eds.),
  {\em LATIN 2008: Theoretical Informatics}, Berlin, Heidelberg, pp.\
  206--215. Springer Berlin Heidelberg.

\bibitem[\protect\citeauthoryear{Schraudolph and Kamenetsky}{Schraudolph and
  Kamenetsky}{2009}]{schraudolph-kamenetsky}
Schraudolph, N.~N. and D.~Kamenetsky (2009).
\newblock Efficient exact inference in planar {I}sing models.
\newblock In D.~Koller, D.~Schuurmans, Y.~Bengio, and L.~Bottou (Eds.), {\em
  Advances in Neural Information Processing Systems 21}, pp.\  1417--1424.
  Curran Associates, Inc.

\bibitem[\protect\citeauthoryear{Straub, Thierauf, and Wagner}{Straub
  et~al.}{2014}]{straub}
Straub, S., T.~Thierauf, and F.~Wagner (2014, June).
\newblock Counting the number of perfect matchings in {K5}-free graphs.
\newblock In {\em 2014 IEEE 29th Conference on Computational Complexity (CCC)},
  pp.\  66--77.

\bibitem[\protect\citeauthoryear{{Tarjan}}{{Tarjan}}{1971}]{tarjan}
{Tarjan}, R. (1971, Oct).
\newblock Depth-first search and linear graph algorithms.
\newblock In {\em 12th Annual Symposium on Switching and Automata Theory
  ({SWAT} 1971)}, pp.\  114--121.

\bibitem[\protect\citeauthoryear{Thomas and Middleton}{Thomas and
  Middleton}{2009}]{thomas-middleton1}
Thomas, C.~K. and A.~A. Middleton (2009, Oct).
\newblock Exact algorithm for sampling the two-dimensional {I}sing spin glass.
\newblock {\em Phys. Rev. E\/}~{\em 80}, 046708.

\bibitem[\protect\citeauthoryear{Thomason}{Thomason}{2001}]{thomason}
Thomason, A. (2001, March).
\newblock The extremal function for complete minors.
\newblock {\em J. Comb. Theory Ser. B\/}~{\em 81\/}(2), 318--338.

\bibitem[\protect\citeauthoryear{Verner~Jensen, Olesen, and
  Andersen}{Verner~Jensen et~al.}{1990}]{jensen}
Verner~Jensen, F., K.~Olesen, and S.~Andersen (1990, 08).
\newblock An algebra of {B}ayesian belief universes for knowledge based
  systems.
\newblock {\em Networks\/}~{\em 20}, 637 -- 659.

\bibitem[\protect\citeauthoryear{Wainwright, Jaakkola, and Willsky}{Wainwright
  et~al.}{2005}]{wainwright}
Wainwright, M.~J., T.~S. Jaakkola, and A.~S. Willsky (2005).
\newblock A new class of upper bounds on the log partition function.
\newblock {\em IEEE Transactions on Information Theory\/}~{\em 51\/}(7),
  2313--2335.

\bibitem[\protect\citeauthoryear{Wilcoxon}{Wilcoxon}{1945}]{wilcoxon}
Wilcoxon, F. (1945).
\newblock Individual comparisons by ranking methods.
\newblock {\em Biometrics bulletin\/}~{\em 1\/}(6), 80--83.

\bibitem[\protect\citeauthoryear{Wilson}{Wilson}{1997}]{wilson}
Wilson, D.~B. (1997).
\newblock Determinant algorithms for random planar structures.
\newblock In {\em Proceedings of the Eighth Annual ACM-SIAM Symposium on
  Discrete Algorithms}, SODA '97, Philadelphia, PA, USA, pp.\  258--267.
  Society for Industrial and Applied Mathematics.

\bibitem[\protect\citeauthoryear{Zhu, Byrd, Lu, and Nocedal}{Zhu
  et~al.}{1997}]{lbfgsb}
Zhu, C., R.~H. Byrd, P.~Lu, and J.~Nocedal (1997, December).
\newblock Algorithm 778: {L-BFGS-B}: Fortran subroutines for large-scale
  bound-constrained optimization.
\newblock {\em ACM Trans. Math. Softw.\/}~{\em 23\/}(4), 550--560.

\end{thebibliography}

\newpage
\begin{appendices}
\section*{
Proof for Lemma \ref{lemma:cond}}

\begin{customlm}{\ref{lemma:cond}}
    Given $\mathcal{I} = \langle G, 0, J \rangle$ where $G$ is planar and a condition $S$ on a connected subset $V' \subseteq V(G)$, $| V' | \leq 3$, computing conditional partition function $Z_{|S}$, and sampling from $\mathbb{P}(X | S)$ are tasks of $O(N^\frac32)$ complexity.
\end{customlm}

\begin{proof} We consider cases depending on $\omega$ and consequently reduce each case to a simpler one. For convenience in cases where applies we denote $u \triangleq v^{(1)}, h \triangleq v^{(2)}, q \triangleq v^{(3)}$:

\begin{enumerate}

\item \textbf{Conditioning on $\omega = 0$ spins.} See Theorem \ref{th:planar}.

\item \textbf{Conditioning on $\omega = 1$ spin.} Since configurations $X$ and $-X$ have the same probability in $\mathcal{I}$, one deduces that $Z_{\,| \, x_u = s^{(1)}} = \frac12 Z$.

One also deduces that sampling $X$ from $\mathbb{P}(X\, | \, x_u = s^{(1)})$ is reduced to 1) drawing $\overline{X} = \{ \overline{x}_v = \pm 1 \}$ from $\mathbb{P}(X)$ and then 2) returning $X = (s^{(1)} \overline{x}_u) \cdot \overline{X}$ as a result.

\item \textbf{Conditioning on $\omega = 2$ spins.} There is an edge $e^{0} = \{ u, h \} \in E(G)$. The following expansion holds:
\begin{align}
    Z_{\,| \, x_u = s^{(1)}, x_h = s^{(2)}} &= \sum_{X, \, x_u = s^{(1)}, \, x_h = s^{(2)}} \exp\bigl( \sum_{ e = \{ v, w \} \in E(G)} J_e x_v x_w \bigr) \nonumber \\
    & = \exp (J_{e^0} s^{(1)} s^{(2)}) \cdot \sum_{X, \, x_u = s^{(1)}, \, x_h = s^{(2)}} \exp\bigl( \sum_{\substack{e = \{ v, w \} \in E(G) \\ e \neq e^0}} J_e x_v x_w\bigr) \nonumber \\
    &= \exp (J_{e^0} s^{(1)} s^{(2)}) \cdot \sum_{X, \, x_u = s^{(1)}, \, x_h = s^{(2)}} \exp\bigl( \sum_{\substack{e = \{ v, w \} \in E(G) \\ e \cap e^0 = \varnothing}} J_e x_v x_w \nonumber \\
    & + \sum_{\substack{e = \{ u, v \} \in E(G) \\ v \neq h}} (J_e s^{(1)}) x_v \cdot 1 + \sum_{\substack{e = \{ h, v \} \in E(G) \\ v \neq u}} (J_e s^{(2)}) x_v \cdot 1\bigr) \label{eq:cond2v}
\end{align}

Obtain graph $G'$ from $G$ by contracting $u, h$ into $z$. $G'$ is still planar and has $N - 1$ vertices. Preserve pairwise interactions of edges which were not deleted after contraction. For each edge $e = \{u, v\}$, $v \neq h$ set $J_{\{ z, v \}} = J_e s^{(1)}$, for each edge $e = \{ h, v \}$, $v \neq u$ set $J_{\{ z, v \}} = J_e s^{(2)}$. Collapse double edges in $G'$ which were possibly created by transforming into single edges. A pairwise interaction of the result edge is set to the sum of collapsed interactions.

Define a zero-field Ising model $\mathcal{I}'$ on the resulted graph $G'$ with its pairwise interactions, inducing a distribution $\mathbb{P}' (X' = \{ x'_v = \pm 1 | v \in V(G') \})$. Let $Z'$ denote $\mathcal{I}'$'s partition function. A closer look at (\ref{eq:cond2v}) reveals that
\begin{equation}
    Z_{\, | \, x_u = s^{(1)}, x_h = s^{(2)}} = \exp (J_{e^0} s^{(1)} s^{(2)}) \cdot Z'_{\,|\, x'_z = 1}
    \label{eq:ztrans}
\end{equation}
where $Z'_{\,|\, z'_y = 1}$ is a partition function conditioned on a single spin and can be found efficiently as shown above.

Since the equality of sums (\ref{eq:ztrans}) holds summand-wise, for a given $X'' = \{x''_v = \pm 1 \, | \, v \in V(G) \setminus \{ u, h \} \}$ the probabilities $\mathbb{P}(X'' \cup \{ x_u = s^{(1)}, x_h = s^{(2)} \} \, | \, x_u = s^{(1)}, x_h = s^{(2)})$ and $\mathbb{P}'(X'' \cup \{ x'_z = 1 \} \, | \, x'_z = 1)$ are the same. Hence, sampling from $\mathbb{P}(X \, | \, x_u = s^{(1)}, x_h = s^{(2)})$ is reduced to conditional sampling from planar zero-field Ising model $\mathbb{P}'(X' \, | \, x'_z = 1)$ of size~$N - 1$.

\item \textbf{Conditioning on $w = 3$ spins.} Without loss of generality assume that $u, h$ are connected by an edge $e^0$ in $G$. A derivation similar to (\ref{eq:cond2v}) and (\ref{eq:ztrans}) reveals that (preserving the notation of Case 2)
\begin{equation}
    Z_{\, | \, x_u = s^{(1)}, x_h = s^{(2)}, x_q = s^{(3)}} = \exp (J_{e^0} s^{(1)} s^{(2)}) \cdot Z'_{\,|\, x'_z = 1, x'_q = s^{(3)}}
\end{equation}
which reduces inference conditional on $3$ vertices to a simpler case of 2 vertices. Again, sampling from $\mathbb{P}(X \, | \, x_u = s^{(1)}, x_t = s^{(2)}, x_q = s^{(3)})$ is reduced to a more basic sampling from $\mathbb{P}'(X' \, | \, x'_z = 1, x'_q = s^{(3)})$.
\end{enumerate}
\end{proof}
In principle, Lemma \ref{lemma:cond} can be extended to arbitrarily large $\omega$ leaving a certain freedom for the Ising model conditioning framework. However, in this manuscript we focus on a given special case which is enough for our goals.

\section*{
Proof for Theorem \ref{th:k5dec}}

Prior to the proof, we introduce a series of definitions used in \cite{reed}. It is assumed that a graph $G = (V, E)$ (no loops and multiple edges) is given.

For any $X \subseteq V(G)$ let $G - X$ denote a graph $(V(G) \setminus X, \{ e = \{ v, w \} \in E(G) \, | \, v, w \notin X \})$. $X \subseteq V(G)$ is a $(i, j)$-\textit{cut} whenever $| X | = i$ and $G - X$ has at least $j$ connected components.

The graph is \textit{biconnected} whenever it has no $(1, 2)$-cut. A \textit{biconnected component} of the graph is a maximal biconnected subgraph. Clearly, a pair of biconnected components can intersect in at most one vertex and a graph of components' intersections is a tree when $G$ is connected (\textit{a tree of biconnected components}). The graph is \textit{$3$-connected} whenever it has no $(2, 2)$-cut.

A \textit{$2$-block tree} of a biconnected graph $G$, written $\langle T', \mathcal{G}' \rangle$, is a tree $T'$ with a set $\mathcal{G}' = \{ G'_t \}_{t \in V(T')}$ with the following properties:
\begin{itemize}
    \item[--] $G'_t$ is a graph (possibly with multiple edges) for each $t \in V(T')$.
    \item[--] If $G$ is $3$-connected then $T'$ has a single node $r$ which is colored $1$ and $G'_r = G$.
    \item[--] If $G$ is not $3$-connected then there exists a color $2$ node $t \in V(T')$ such that
    \begin{enumerate}
        \item $G'_t$ is a graph with two vertices $u$ and $v$ and no edges for some $(2, 2)$-cut $\{ u, v \}$ in $G$.
        \item Let $T'_1, \dots, T'_k$ be the connected components (subtrees) of $T' - t$. Then $G - \{ u, v \}$ has $k$ connected components $U_1, \dots, U_k$ and there is a labelling of these components such that $T'_i$ is a $2$-block tree of $G'_i = (V(U_i) \cup \{ u, v \}, E(U_i) \cup \{ \{ u, v \} \})$.
        \item For each $i$, there exists exactly one color $1$ node $t_i \in V(T'_i)$ such that $\{ u, v \} \subseteq V(G'_{t_i})$.
        \item For each $i$, $\{ t, t_i \} \in E(T)$.
    \end{enumerate}
\end{itemize}

A \textit{$(3, 3)$-block tree} of a $3$-connected graph $G$, written $\langle T'', \mathcal{G}'' \rangle$, is a tree $T''$ with a set $\mathcal{G}'' = \{ G''_t \}_{t \in V(T'')}$ with the following properties:
\begin{itemize}
    \item[--] $G''_t$ is a graph (possibly with multiple edges) for each $t \in V(T'')$.
    \item[--] If $G$ has no $(3, 3)$-cut then $T$ has a single node $r$ which is colored $1$ and $G_r = G$.
    \item[--] If $G$ has a $(3, 3)$-cut then there exists a color $2$ node $t \in V(T'')$ such that
    \begin{enumerate}
        \item $G''_t$ is a graph with vertices $u$, $v$ and $w$ and no edges for some $(3, 3)$-cut $\{ u, v, w \}$ in $G$.
        \item Let $T''_1, \dots, T''_k$ be the connected components (subtrees) of $T'' - t$. Then $G - \{ u, v, w \}$ has $k$ connected components $U_1, \dots, U_k$ and there is a labelling of these components such that $T_i$ is a $(3, 3)$-block tree of $G''_i = (V(U_i) \cup \{ u, v, w \}, E(U_i) \cup \{ \{ u, v \}, \{ v, w \}, \{ u, w \} \})$.
        \item For each $i$, there exists exactly one color $1$ node $t_i \in V(T''_i)$, such that $\{ u, v, w \} \subseteq V(G''_{t_i})$.
        \item For each $i$, $\{ t, t_i \} \in E(T'')$.
    \end{enumerate}
\end{itemize}

\begin{customthm}{\ref{th:k5dec}}
Let $G$ be $K_5$-free graph of size $N$ with no loops or multiple edges. Then the $8$-nice decomposition $\mathcal{T}$ of $G$ exists and can be computed in time $O(N)$.
\end{customthm}

\begin{proof}
Since $G$ is $K_5$-free and has no loops or multiple edges, it holds that $| E(G) | = O(N)$ \cite{thomason}. In time $O(N)$ we can find a forest of $G$'s biconnected components \cite{tarjan}. If we find an $8$-nice decomposition for each biconnected component, join them into a single $8$-nice decomposition by using attachment sets of size $1$ for decompositions inside $G$'s connected component and attachment sets of size $0$ for decompositions in different connected components. Hence, further we assume that $G$ is biconnected.

The $O(N)$ algorithm of \cite{reed} finds a $2$-block tree $\langle T', \mathcal{G}' \rangle$ for $G$ and then for each color $1$ node $G'_t \in \mathcal{G}'$ it finds $(3, 3)$-block tree $\langle T'', \mathcal{G}'' \rangle$ where all components are either planar or M\"{o}bius ladders. To get an $8$-nice decomposition from each $(3, 3)$-block tree, 1) for each color $2$ node contract an edge between it and one of its neighbours in $T''$ and 2) remove all edges which were only created during $\langle T'', \mathcal{G}'' \rangle$ construction (2nd item of $(3, 3)$-block tree definition).

Now we have to draw additional edges in the forest $F$ of obtained $8$-nice decompositions so that to get a single $8$-nice decomposition $\mathcal{T}$ of $G$. Notice that for each pair of adjacent nodes $G'_t, G'_s \in \mathcal{G}'$ where $G'_t$ is color $1$ node and $G'_s = (\{ u, v \}, \varnothing)$ is a color $2$ node, $u, v$ are in $V(G'_t)$ and $\{ u, v \} \in E(G'_t)$. Hence, there is at least one component $G''_{r}$ of $8$-nice decomposition of $G'_t$ where both $u$ and $v$ are present. For each pair of $s$ and $t$ draw an edge between $s$ and $r$ in $F$. Then 1) for each color $2$ node in $F$ (such as $s$) contract an edge between it and one of its neighbors (such as $r$) and 2) remove all edges which were created during $\langle T', \mathcal{G}' \rangle$ construction (2nd item of $2$-block tree definition). This results is a correct $c$-nice decomposition for biconnected $G$.
\end{proof}

\section*{Upper Bound Minimization and Marginal Computation in Approximation Scheme}

Denote:
\begin{equation*}
h(J') \triangleq \min_{\rho(r) \geq 0, \sum_r \rho(r) = 1} g(J', \rho), \qquad g(J', \rho) \triangleq \min_{\{ J^{(r)} \}, \sum_r \rho(r) \hat{J}^{(r)} = J'} \sum_r \rho(r) \log Z(G^{(r)}, 0, J^{(r)})
\end{equation*}
where $h(J')$ is a tight upper bound for $\log Z (G', 0, J')$.

Given a fixed $\rho$, we compute $g(J', \rho)$ using L-BFGS-B optimization \cite{lbfgsb} by back-propagating through $Z(G^{(r)}, 0, J^{(r)})$ and projecting gradients on the constraint linear manifold. On the upper level we also apply L-BFGS-B algorithm to compute $h(J')$, which is possible since \cite{wainwright,globerson}
\begin{equation*}
    \frac{\partial}{\partial \rho(r)} g(J', \rho) = \log Z(G^{(r)}, 0, J^{(r)}_{min}) - (M^{(r)})^\top J^{(r)}_{min}, \,\,\, M^{(r)} \triangleq \frac{\partial}{\partial J^{(r)}_{min}} \log Z(G^{(r)}, 0, J^{(r)}_{min})
\end{equation*}
where $\{ J^{(r)}_{min} \}$ is argmin inside $g(J', \rho)$'s definition and $M^{(r)} = \{ M^{(r)}_e \, | \, e \in E(G^{(r)}) \}$ is a vector of \textit{pairwise marginal expectations}. We reparameterize $\rho(r)$ into $\frac{w(r)}{\sum_{r'} w(r')}$ where $w(r) > 0$.

For $e = \{ v, w \} \in E(G)$ we approximate pairwise marginal probabilities as \cite{wainwright,globerson}
\begin{equation*}
    \mathbb{P}^{alg} (x_v x_w = 1) = \frac12 \cdot \lbrack \sum_r \rho(r) M^{(r)}_e \rbrack + \frac12
\end{equation*}
Let $e_{A}$ be an edge between central vertex $v$ and apex in $G'$. We approximate singleton marginal probability at vertex $v$ as
\begin{equation*}
    \mathbb{P}^{alg} (x_v = 1) = \frac12 \cdot \lbrack \sum_r  \rho(r) M^{(r)}_{e_A} \rbrack + \frac12
\end{equation*}

\section*{Future Work}

We see the following straightforward extensions of the algorithm presented in the manuscript:
\begin{enumerate}
    \item The work \cite{curticapean} extends the polynomial scheme of \cite{straub} for perfect matching counting in a case when $G$ is $H$-free, where $H$ is a \textit{single-crossing} graph, i.e. a minor of an arbitrary graph that can be drawn on a plane with no more than one edge crossing. We claim without proofs that the same applies for a setting considered in this manuscript.

    \item In \cite{straub} authors also present a parallel version of their perfect matching counting scheme in $K_5$-free graphs and show that the problem is in $\text{TC}^1$ parallel complexity class. We claim without proofs that the same applies for inference of $K_5$-free zero-field Ising models.

\end{enumerate}
\end{appendices}

\end{document}